\newcommand{\rv}[1]{\ensuremath{{\boldsymbol{#1}}}}
\renewcommand{\vec}[1]{\ensuremath{{\underline{#1}}}}
\newcommand{\rvec}[1]{\ensuremath{{\boldsymbol{\underline{#1}}}}}
\newcommand{\mat}[1]{{\ensuremath{{\mathbf{#1}}}}}
\def\atan2{\operatorname{atan2}}
\renewcommand{\arg}{\operatorname*{arg}}
\renewcommand{\Im}[1][KeinImagTeil]{%
  \ifthenelse{\equal{#1}{KeinImagTeil}}{
    \ensuremath{\operatorname{Im}}%
  }{
    \ensuremath{\operatorname{Im}\!\left\{#1\right\}}%
  }
}
\renewcommand{\Re}[1][KeinRealTeil]{%
  \ifthenelse{\equal{#1}{KeinRealTeil}}{
    \ensuremath{\operatorname{Re}}%
  }{
    \ensuremath{\operatorname{Re}\!\left\{#1\right\}}%
  }
}
\def\T{^\mathrm{T}} 
\DeclareMathOperator{\trace}{trace}
\def\({\left(}
\def\){\right)}
\newcommand{\Gauss}{{\mathcal{N}}}
\theoremstyle{plain}
{\theorembodyfont{\itshape} \newtheorem{theorem}     {Theorem}}
{\theorembodyfont{\itshape} }
{\theorembodyfont{\itshape}}
{\theorembodyfont{\sffamily}}
{\theorembodyfont{\normalfont} }
{\theorembodyfont{\rmfamily}}
\newenvironment{proof}
{\noindent{\scshape Proof.}}
{\hspace*{\fill}$\square$}
\def\CON{{CONVEX}}
\def\BBC{{BBC}}
\def\BBL{{BBL}}
\def\BBZ{{BBZ}}
\def\GRE{{GREEDY}}
\def\GRM{{GREEDY*}}
\def\BB{branch-and-bound}
\def\Sec#1{Sec.~\ref{#1}}
\def\Fig#1{Fig.~\ref{#1}}
\def\Theo#1{Theorem~\ref{#1}}
\def\Alg#1{Algorithm~\ref{#1}}
\algrenewcommand{\algorithmiccomment}[1]{\hfill// #1}
\def\x{\rv x}
\def\y{\rv y}
\def\vc{\vec c}
\def\vu{\vec u}
\def\vx{\vec x}
\def\rvv{\rvec v}
\def\rvw{\rvec w}
\def\rvx{\rvec x}
\def\rvz{\rvec z}
\def\A{\mat A}
\def\C{\mat C}
\def\H{\mat H}
\def\I{\mat I}
\def\M{\mat M}
\def\P{\mat P}
\def\hvx{\hat{\vec x}}
\def\hvz{\hat{\vec z}}
\def\SS{\mathcal S}
\def\SU{\mathcal U}
\newcommand{\Second}[2][]{\unit{#2}#1\second}
\begin{document}
%
\title{Convex Optimization Approach to Multi-Step Sensor Scheduling}
\title{On Multi-Step Sensor Scheduling via \\Convex Optimization}

\author{\IEEEauthorblockN{Marco F. Huber}
\IEEEauthorblockA{%
Variable Image Acquisition and Processing Research Group (VBV)\\
Fraunhofer Institute of Optronics, System Technologies and Image Exploitation IOSB\\
Karlsruhe, Germany\\
Email: marco.huber@ieee.org}
}



\maketitle

\begin{abstract}
Effective sensor scheduling requires the consideration of long-term effects and thus optimization over long time horizons. 
Determining the optimal sensor schedule, however, is equivalent to solving a binary integer program, which is computationally demanding for long time horizons and many sensors. 
For linear Gaussian systems, two efficient multi-step sensor scheduling approaches are proposed in this paper. The first approach determines approximate but close to optimal sensor schedules via convex optimization. The second approach combines convex optimization with a \BB~search for efficiently determining the optimal sensor schedule. 
\end{abstract}


%
\IEEEpeerreviewmaketitle

\section{Introduction}
\label{sec:intro}
%
The recent advances in miniaturization, wireless communication, and sensor technology make it possible to build up and deploy sensor systems 
for a smart and persistent surveillance. For instance, sensor networks consisting of numerous inexpensive sensors are a popular subject in research and practice for monitoring physical phenomena including, 
temperature and humidity distributions, biochemical concentrations, or vibrations in buildings~\cite{Akyildiz_2002}. 
For many of such sensor systems it is necessary to balance between maximizing the information gain and minimizing the consumption of limited resources like energy, computing power, or communication bandwidth.
\emph{Sensor scheduling}, which is also referred to as sensor selection, 
allows trading off these conflicting goals and forms the basis for an efficient and intelligent processing of the sensor data. 

In this paper, the sensor scheduling problem for linear Gaussian systems is studied, where one out of a set of sensors is selected at each time instant for performing a measurement. 
The main objective is to allocate the sensors in a most informative way, which requires making decisions involving multiple time steps ahead. 
Many of the existing multi-step sensor scheduling approaches for linear Gaussian problems are focused on efficiently traversing a decision tree of sensor schedules. 
In order to avoid enumerating all possible sensor schedules of the tree, optimal or suboptimal search techniques are employed. While optimal techniques yield the optimal sensor schedule by all means (see e.g. \cite{Logothetis1999, Fusion08_Huber_PLSS}), 
suboptimal methods as those in \cite{Gupta_ICASSP2004} allow more significant savings in computational demands by abdicating the guarantee of conserving the optimal schedule.

Alternatively to traversing the decision tree, which corresponds to solving a binary integer program, convex optimization approaches have recently been proposed for solving sensor selection problems, i.e., problems of selecting the best $n$-element subset from a set of sensors (see \cite{Chhetri_ISP2007, Joshi_TSP2009}). 
These approaches can significantly improve the efficiency of determining informative sensor schedules, but they are so far not appropriate for
\emph{optimal multi-step} sensor scheduling for \emph{arbitrary} linear Gaussian dynamics and sensor models.  

Both multi-step sensor scheduling approaches proposed in this paper overcome these restrictions. 
For linear Gaussian systems, the sensor scheduling problem is stated in \Sec{sec:problem} and is formulated as a binary integer problem in \Sec{sec:bip}. 
In \Sec{sec:convex} it is formally proven that this optimization problem is a convex optimization problem when employing continuous relaxation of the decision variables. The first approach directly solves the resulting convex program, which leads to suboptimal but valuable sensor schedules without demanding many computations and memory. 
In order to provide the optimal sensor sequence, the second approach described in \Sec{sec:optimal} utilizes \BB~search for traversing a decision tree. To exclude complete subtrees containing suboptimal sensor schedules as early as possible, the solution of the convex optimization is used for calculating tight lower and upper bounds to the subtrees' values. 
The performance of the proposed approaches is demonstrated by means of simulations in \Sec{sec:sim}, while in \Sec{sec:conclusions} conclusions 
and an outlook to future work are given.

\section{Problem Formulation}
\label{sec:problem}
In this paper, the sensor scheduling problem for discrete-time linear Gaussian models is examined.
The dynamics model of the observed system is given by
\begin{equation}
	\label{eq:problem_system}
	\rvx_{k+1} = \A_k\cdot \rvx_k + \rvw_k~,
\end{equation}
where $k = 0, 1, \ldots, $ is the discrete time index. A finite set $\mathcal S$ of sensors is considered for performing measurements, where measurement $\rvz_k^i$ from sensor $i \in \mathcal S = \{1, \ldots, S\}$ is related to the system state $\rvx_k$ via the measurement model
\begin{equation}
	\label{eq:problem_sensor}
	\rvz_k^i = \H_k^i \cdot \rvx_k + \rvv_k^i~.
\end{equation}
Both $\A_k$ and $\H_k^i$ are time-variant matrices. The noise terms $\rvw_k$ and $\rvv_k^i$ are zero-mean white Gaussian with covariance matrices $\C_k^w$ and $\C_k^{v,i}$, respectively. A measurement value $\hvz_k^i$ of sensor $i \in \mathcal S$ is a realization of $\rvz_k^i$. 
The initial system state $\rvx_0 \sim \Gauss(\vx_0; \hvx_0, \C_0^x)$ at time step $k=0$ is Gaussian with mean $\hvx_0$ and covariance~$\C_0^x$.

The aim of multi-step sensor scheduling is to minimize the state covariance $\C_k^x$ and thus the uncertainty of the state estimate under the consideration of the future behavior of the observed dynamical system and long-term sensing costs. For this purpose, 
the optimal sensor schedule $\vu_{1:N}^* = \bigl[ \(\vu_{1}^*\)\T, \ldots, \(\vu_{N}^*\)\T\bigr]\T \in \{0, 1\}^{S\cdot N}$ has to be determined over a finite $N$-step time horizon. Here, the binary vector $\vu_{k}^* = [u_{k,1}, \ldots, u_{k,S}]\T$ encodes the index of the sensor to be scheduled for measurement at time step $k$, i.e., if sensor $i$ is scheduled at time step $k$ then $u_{k,i} = 1$ and $u_{k,j} = 0$ for all $j \neq i$. 
%



\section{Constrained Optimization}
\label{sec:bip}
To determine the optimal sensor schedule $\vu_{1:N}^*$ over a time horizon of length $N$, the sensor scheduling problem for the problem setting given in \Sec{sec:problem} can generally be formulated as constrained optimization problem according to
%
\begin{align}
	\label{eq:problem_optimization}
	\vu_{1:N}^* = \arg \min_{\vu_{1:N}} & \ J(\vu_{1:N})
	\\
	\label{eq:problem_cost}
	\text{subject to} & \ \sum_{k=1}^N \ \vec c_k\T \cdot \vu_k \le C~,
	\\
	\label{eq:problem_one}
	& \ \vec 1\T \cdot \vu_k = 1~, \quad k = 1, \ldots, N~,
	\\
	\label{eq:problem_binary}
	& \ \vu_k \in \{0, 1\}^S~, \quad k = 1, \ldots, N~,
\end{align}
where $\vec c_k = [c_{k,1}, \ldots, c_{k,S}]\T$ contains the sensor costs $c_{k,i}$, e.g., energy or communication, of selecting sensor $i$ at time step~$k$ and $J(\vu_{1:N}) = \sum_{k=1}^N \ g_k\hspace{-.5mm}\( \vu_{1:k} \)$ is the cumulative \emph{objective function} to be minimized. The scalar functions $g_k(\cdot)$, i.e., the summands of $J(\vu_{1:N})$, can be
\begin{itemize}
	\item the trace operator $\trace \(\C_k^x(\vu_{1:k})\)$~,
	\item the root-determinant $\sqrt{|\C_k^x(\vu_{1:k})|}$~,
	\item or the maximum eigenvalue $\lambda_{\mathrm{max}}\(\C_k^x(\vu_{1:k})\)$
\end{itemize}
of the state covariance and thus quantify the uncertainty subsumed in $\C_k^x(\vu_{1:k})$. 
Due to the linearity of \eqref{eq:problem_system} and \eqref{eq:problem_sensor}, the state covariance itself is given by the Kalman filter covariance recursion (given here in information form)
\begin{multline}
	\label{eq:problem_cov}
	\C_k^x(\vu_{1:k}) = \Bigl( \bigl(\A_{k-1} \cdot \C_{k-1}^x(\vu_{1:k-1}) \cdot \A_{k-1}\T+\C_{k-1}^w\bigr)^{-1}
	\\
	+\sum_{i=1}^S\ u_{k,i} \cdot \bigl(\H_k^i\bigr)\T \cdot \bigl(\C_k^{v,i}\bigr)^{-1} \cdot \H_k^i \Bigr)^{-1}~,
\end{multline}
commencing from $\C_0^x$. The independence of the covariance recursion from the measurement values $\hvz_k^i$, $i=1,\ldots,\cal S$, allows the off-line calculation of the optimal sensor schedule.

With the contraint in \eqref{eq:problem_cost} 
it is guaranteed that a feasible sensor schedule does not exceed a maximum sensor cost $C$. The constraints in \eqref{eq:problem_one} and \eqref{eq:problem_binary} together ensure that one sensor per time step is selected for measurement. 
The restriction to one sensor per time step is made for brevity and clarity reasons only. 
The extension to selecting multiple sensors per time step can be achieved by replacing the right hand side of \eqref{eq:problem_one} with the desired number of sensors.
Alternatively, by modifying \eqref{eq:problem_optimization} and \eqref{eq:problem_cost},
is also possible to minimize the sensor costs regarding a maximum allowed value of $J(\cdot)$, i.e., a maximum allowed uncertainty. The results presented in the following can be easily altered to cover this modified optimization problem.

\section{Convex Relaxation}
\label{sec:convex}
%
%
The optimization problem in \eqref{eq:problem_optimization}--\eqref{eq:problem_binary} is a so-called \emph{binary integer program}. Problems of this type are known to be NP-hard (see \cite{Karp1972}) and thus, obtaining the optimal solution for large $N$ and/or large $S$ is computationally prohibitive in general. However, by replacing the binary non-convex constraints in \eqref{eq:problem_binary} with the linear constraints $\vu_{k} \in [0,1]^S$ for $k = 1, \ldots, N$, a convex relaxation of the original problem is obtained. To see this, it is important to note that the constraints \eqref{eq:problem_cost} and \eqref{eq:problem_one} are already convex. Furthermore, as shown in the following theorem, the sum to be minimized in \eqref{eq:problem_optimization} is now convex as well.

\begin{theorem}
\label{theo:convex}
The objective function $J(\vu_{1:N})$ in \eqref{eq:problem_optimization} is convex in terms of $\vu_{1:N}\in [0,1]^{S\cdot N}$. 
\end{theorem}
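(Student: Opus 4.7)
The plan is to show that each summand $g_k(\vu_{1:k})$ is convex in $\vu_{1:k}$, so that $J(\vu_{1:N})$, being a sum of convex functions, inherits convexity. Although the covariance recursion in \eqref{eq:problem_cov} depends on the schedule in a highly nonlinear way, I would exploit a batch reformulation over the whole horizon that exposes an \emph{affine} dependence of the relevant information matrix on $\vu_{1:N}$, after which the claim follows by chaining standard convexity facts.

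First I would collect the states into the augmented vector $\tilde{\vx} = [\vx_0\T,\vx_1\T,\ldots,\vx_N\T]\T$. The prior on $\rvx_0$ together with the linear dynamics \eqref{eq:problem_system} induces a joint Gaussian prior on $\tilde{\vx}$ whose covariance $\tilde{\C}$ is independent of $\vu_{1:N}$. Under the convex relaxation $u_{k,i}\in[0,1]$, the weighted posterior information matrix reads
\begin{equation*}
\tilde{\mat{Y}}(\vu_{1:N}) = \tilde{\C}^{-1} + \sum_{k=1}^{N}\sum_{i=1}^{S} u_{k,i}\,\bigl(\tilde{\H}_k^i\bigr)\T \bigl(\C_k^{v,i}\bigr)^{-1} \tilde{\H}_k^i,
\end{equation*}
where $\tilde{\H}_k^i$ is $\H_k^i$ zero-padded so as to act on the $k$-th block of $\tilde{\vx}$. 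This is affine in $\vu_{1:N}$, and comparing the batch information-form posterior with the sequential recursion \eqref{eq:problem_cov} yields the identity $\C_k^x(\vu_{1:k}) = \tilde{\mat{B}}_k\,\tilde{\mat{Y}}(\vu_{1:N})^{-1}\tilde{\mat{B}}_k\T$ for a fixed block-extraction matrix $\tilde{\mat{B}}_k$.

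From here the convexity chain is classical. The map $\mat{Y}\mapsto\mat{Y}^{-1}$ is matrix convex on the positive-definite cone, and matrix convexity is preserved by precomposition with an affine map and by congruence with a fixed matrix; hence $\C_k^x$ is matrix convex in $\vu_{1:N}$. The trace is linear, and $\lambda_{\max}(\cdot) = \sup_{\|\vv\|=1}\vv\T(\cdot)\vv$ is a supremum of linear functionals, so both are Loewner-monotone and scalar-convex; composed with the matrix-convex $\C_k^x$ they therefore produce scalar convex functions of $\vu_{1:N}$. The root-determinant case is handled analogously through its Schur-complement representation. Summing $g_k$ over $k$ then gives convexity of $J$.

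I expect the main obstacle to be the first step: cleanly defining the augmented prior $\tilde{\C}$ and the padded measurement matrices $\tilde{\H}_k^i$, and verifying that the $k$-th diagonal block of $\tilde{\mat{Y}}(\vu_{1:N})^{-1}$ really equals $\C_k^x(\vu_{1:k})$ from \eqref{eq:problem_cov}, even when the $u_{k,i}$ are relaxed to $[0,1]$. After establishing this batch/sequential equivalence, the composition rules for matrix-convex and scalar-convex functions handle the rest; an alternative route avoiding the augmentation is an induction on $k$ in which the Woodbury identity is used to propagate matrix concavity of $\mat{Y}_k=(\C_k^x)^{-1}$, the key auxiliary fact being that the inverse of a matrix-concave function is matrix convex.
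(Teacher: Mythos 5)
Your overall strategy --- reduce everything to an information matrix that is \emph{affine} in the decision variables, then invoke matrix convexity of the inverse, closure under congruence, and composition with Loewner-monotone scalar convex functions --- is sound and genuinely different from the paper's argument, which instead proves matrix convexity of $\C_k^x(\vu_{1:k})$ by induction on the recursion \eqref{eq:problem_cov}, repeatedly applying convexity of the matrix inverse at each step. (Your closing remark about propagating matrix concavity of $(\C_k^x)^{-1}$ through the recursion is essentially the paper's route.) The batch reformulation, once correct, is arguably cleaner because it needs the convexity of $\mat Y \mapsto \mat Y^{-1}$ only once rather than inside an induction whose step (a) is the delicate part.

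However, the central identity as you state it is false, and this is a genuine gap rather than a bookkeeping issue. The matrix $\tilde{\mat Y}(\vu_{1:N})$ you define accumulates the measurement information of \emph{all} time steps $1,\ldots,N$, so the $k$-th diagonal block of $\tilde{\mat Y}(\vu_{1:N})^{-1}$ is the \emph{smoothed} covariance of $\rvx_k$ given the whole horizon of measurements, which is in general strictly smaller (in the Loewner order) than the \emph{filtered} covariance $\C_k^x(\vu_{1:k})$ produced by \eqref{eq:problem_cov}; the inconsistency is already visible in your identity, whose left side depends only on $\vu_{1:k}$ while its right side depends on all of $\vu_{1:N}$. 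The repair is to define, for each $k$, a truncated information matrix $\tilde{\mat Y}_k(\vu_{1:k}) = \tilde{\C}^{-1} + \sum_{j=1}^{k}\sum_{i=1}^{S} u_{j,i}\,(\tilde{\H}_j^i)\T (\C_j^{v,i})^{-1} \tilde{\H}_j^i$, which is still affine in $\vu_{1:N}$; marginalizing the joint posterior of the trajectory given only the first $k$ (weighted) measurements then does give $\C_k^x(\vu_{1:k}) = \tilde{\mat B}_k \tilde{\mat Y}_k(\vu_{1:k})^{-1}\tilde{\mat B}_k\T$, and the rest of your chain goes through. Two further points to make explicit: the batch route needs $\tilde{\C}$ to be invertible, which holds when $\C_0^x$ and all $\C_k^w$ are positive definite but is a stronger hypothesis than the recursion \eqref{eq:problem_cov} requires; and the composition step needs not just scalar convexity of $g$ but also Loewner monotonicity, which you correctly note for the trace and $\lambda_{\max}$ --- the root-determinant case deserves the same explicit treatment rather than a one-line appeal to a Schur-complement representation.
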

\begin{proof}
To prove the convexity of $g_k(\vu_{1:k})$ and thus of $J(\vu_{1:N})$, it must be shown that (see for example \cite{Boyd2008})
\begin{multline}
	\label{eq:convex_proof}
	g_k\hspace{-.5mm}\(\lambda\cdot \vu_{1:k} + (1\hspace{-.5mm}-\hspace{-.5mm}\lambda)\cdot \tilde \vu_{1:k}\) 
  \ \le \\
	\lambda\cdot g_k\hspace{-.5mm}\(\vu_{1:k}\) + (1\hspace{-.5mm}-\hspace{-.5mm}\lambda)\cdot g_k\hspace{-.5mm}\(\tilde \vu_{1:k}\)
\end{multline}
for $k = 1,\ldots, N$, $\forall\, \vu_{1:k}, \tilde \vu_{1:k} \in [0, 1]^{k\cdot S}$, and $\forall\, \lambda\in [0,1]$.

At first, it is proven by induction that the covariance recursion \eqref{eq:problem_cov} is a convex function of $\vu_{1:k}$.  
The induction starts with $\C_1^x(\vu_1)$. Defining $\M_k^i := \(\H_k^i\)\T \cdot \bigl(\C_k^{v,i}\bigr)^{-1} \cdot \H_k^i$ and $\P_1(\vu_1) := \(\A_0 \cdot \C_0^x \cdot \A_0\T + \C_0^w\)^{-1} + \sum_i u_{1,i} \cdot \M_1^i$ and utilizing the results in \cite{Kraus1936} on matrix convex functions, it follows from the matrix convexity property of the matrix inversion that
\begin{align}
	\C_1^x(\lambda\cdot \vu_1+(1\hspace{-.5mm}-\hspace{-.5mm}\lambda)\cdot \tilde \vu_1) 
	&= \(\lambda\cdot \P_1(\vu_1) + (1\hspace{-.5mm}-\hspace{-.5mm}\lambda)\cdot \P_1(\tilde \vu_1)\)^{-1} \\[-7mm]
	& \le \lambda\cdot \underbrace{\P_1^{-1}(\vu_1)}_{= \C_1^x(\vu_1)} 
	+ (1\hspace{-.5mm}-\hspace{-.5mm}\lambda)\cdot \underbrace{\P_1^{-1}(\tilde \vu_1)}_{= \C_1^x(\tilde \vu_1)}\\[-9mm]
\end{align}
$\forall\, \vu_1, \tilde \vu_1 \in [0, 1]^{S}$ and $\forall\, \lambda\in [0,1]$. 
Defining the predicted covariance $\C_k^p(\vu_{1:k-1}) := \A_{k-1} \cdot \C_{k-1}^x(\vu_{1:k-1})\cdot\A_{k-1}\T + \C_{k-1}^w$, it generally holds that
%
\begin{align}
	\C_k^x&(\lambda\cdot \vu_{1:k} + (1\hspace{-.5mm}-\hspace{-.5mm}\lambda)\cdot \tilde \vu_{1:k})\\
	&=\Bigl( \C_{k}^p\hspace{-.7mm}\(\lambda\cdot\vu_{1:k-1} + (1\hspace{-.5mm}-\hspace{-.5mm}\lambda)\cdot\tilde \vu_{1:k-1}\)^{-1}\\ 
	&\hspace{1cm}+ \sum_{i=1}^S\, (\lambda\cdot u_{k,i} + (1\hspace{-.5mm}-\hspace{-.5mm}\lambda)\cdot\tilde u_{k,i}) \cdot \M_k^i \Bigr)^{\hspace{-.2mm}-1}\\
	&\stackrel{(a)}{\le} \biggl( \lambda\cdot\Bigl(\C_{k}^{p}\hspace{-.7mm}\(\vu_{1:k-1}\)^{-1} + \sum_{i=1}^S\ u_{k,i}\cdot \M_k^i\Bigr) \\
	&\hspace{1cm}+ (1\hspace{-.5mm}-\hspace{-.5mm}\lambda)\cdot\Bigl( \C_{k}^{p}\hspace{-.7mm}\(\tilde\vu_{1:k-1}\)^{-1} + \sum_{i=1}^S\ \tilde u_{k,i}\cdot \M_k^i\Bigr) \biggr)^{-1}\\[-4mm]
	\label{eq:convex_Cn}
	&\stackrel{(b)}{\le} \lambda\cdot \C_k^x(\vu_{1:k}) + (1\hspace{-.5mm}-\hspace{-.5mm}\lambda)\cdot \C_k^x(\tilde \vu_{1:k})
\end{align}
for $k = 2,\ldots, N$, $\forall\, \vu_{1:k}, \tilde \vu_{1:k} \in [0, 1]^{k\cdot S}$, and $\forall\, \lambda\in [0,1]$. Here, (a) results from the induction hypothesis that $\C_{k-1}^x(\vu_{1:k-1})$ is convex in $\vu_{1:k-1}$, from the convexity of the matrix inversion, and from rearranging terms; (b) is the result of a repeated application of the convexity of the matrix inversion.

As the trace is a linear matrix function and the root-determinant as well as the maximum eigenvalue are convex matrix functions (see for example \cite{Boyd2008}), the inequality in \eqref{eq:convex_proof} holds if these three functions are applied on \eqref{eq:convex_Cn}. Thus, $g_k\hspace{-.5mm}\(\vu_{1:k}\)$ is convex and the nonnegative sum  $J(\vu_{1:N}) = \sum_{k=1}^N g_k(\vu_{1:k})$ is convex as well, which concludes the proof.
\end{proof}

\Theo{theo:convex} forms one of the main contributions of this paper. It is important to note that the sensor scheduling problem formulated by \eqref{eq:problem_optimization}--\eqref{eq:problem_cov} and its convex relaxation proven in \Theo{theo:convex} extends existing convex approaches \cite{Chhetri_ISP2007, Joshi_TSP2009} in many ways. Instead of one-step time horizons, i.e., myoptic/greedy scheduling, arbitrarily long time horizons are possible. Furthermore, the dynamics model in \eqref{eq:problem_system} need not to be restricted to regular system matrices $\A_k$ and to system noise covariances $\C_k^w = \mat 0$. Especially the latter is of paramount importance for realistic sensor scheduling problems. Finally, 
there is no restriction to a specific scalar function $g_k(\cdot)$ as in \cite{Chhetri_ISP2007}. Instead, various functions for evaluating the quality of a sensor schedule are considered here.

\subsection{Solving the Relaxed Problem}
\label{sec:convex_solving}
The computational complexity of optimally solving the original binary integer program is in $\mathcal O(S^N)$. 
A variety of methods is available for efficiently solving the convex relaxation of the sensor scheduling problem, e.g., interior-point methods \cite{Boyd2008}. These methods typically require only a few tens of iterations for calculating the optimal solution even for large problem sizes, e.g., length of time horizon and number of sensors beyond $10$. The computational complexity of one iteration is polynomial in the number of variables in $\vu_{1:N}$, which is~$S\cdot N$.

However, the solution $\vu_{1:N}^\text{l}$ of the convex problem only approximates the optimal solution $\vu_{1:n}^*$ of the original scheduling problem. More specifically, $\vu_{1:N}^\text{l}$ is no longer binary and the objective function value $J^\text{l} := J(\vu_{1:N}^\text{l})$ is a \emph{lower bound} of the optimal value $J(\vu^*_{1:N})$. The latter finding follows directly from the convexity of the relaxed problem and from the fact that the relaxed solution set $[0,1]^{S\cdot N}$ contains the binary set of the original problem.

\subsection{Conversion into Binary Solution}
\label{sec:convex_conversion}
In order to allow selecting sensors for measurement, $\vu_{1:N}^\text{l}$ has to be converted into a binary vector by employing an appropriate conversion or rounding method.
The value $J^\text{u} := J(\vu_{1:N}^\text{u})$ of the resulting (binary) sensor schedule $\vu_{1:N}^\text{u}$ has to be as close as possible to the optimal one in order to provide informative sensor measurements. 
In the following, two appropriate conversion methods are introduced. Independent of the chosen conversion method, the value $J^\text{u}$ of the converted sensor schedule provides an \emph{upper bound} to the optimal value~$J(\vu^*_{1:N})$.

\subsubsection{Sampling}
\label{sec:convex_conversion_sampling}
Each component $\vu_k^\text{l}$ of $\vu_{1:N}^\text{l}$ can be interpreted as a discrete probability distribution over the set of sensor indices $\mathcal S$. This is due to 
the constraint in \eqref{eq:problem_one}, whereby the elements $u_{k,i}^\text{l}, i = 1,\ldots, S$ of $\vu_k^\text{l}$ are within the interval $[0,1]$ and sum up to one.
Hence, a sensor $i$ corresponding to an element $u_{k,i}^\text{l}$ with a large value can be considered as being more likely in the optimal sensor schedule than sensors with small values. 

To convert $\vu_{1:N}^\text{l}$ into a feasible binary vector, for each $k=1,\ldots,N$ a (single) sensor is randomly selected according to the distribution $\vu_k^\text{l}$. For being \emph{feasible}, the resulting converted schedule $\vu_{1:k}^\text{u}$ has to satisfy the cost constraint \eqref{eq:problem_cost}. Otherwise, the schedule is discarded. This procedure is repeated multiple times, where only the currently best feasible schedule, i.e., the schedule that satisfies \eqref{eq:problem_cost} \emph{and} provides the currently smallest objective function value $J^\text{u}$ is stored. 
The sampling-based conversion method can be terminated for example after a predefined number of trials or when the currently best value $J^\text{u}$ remains unchanged for a predefined number of trials.

\subsubsection{Swapping}
\label{sec:convex_conversion_swapping}
A converted schedule $\vu_{1:N}^\text{u}$ can be improved by adapting the swapping method proposed in \cite{Joshi_TSP2009}. A modified sensor schedule is derived from $\vu_{1:N}^\text{u}$ by swapping a scheduled sensor with one of the unselected sensors for each time step $k$. 
The choice of an unselected sensor at time step $k$ is \emph{deterministically} guided according to the probabilities represented by $\vu_k^\text{l}$, i.e., the sensors are selected for swapping in descending order of the values in $\vu_k^\text{l}$. If the modified schedule is feasible and improves the objective function value $J^\text{u}$, it is used as starting point for the next swapping trial. 

In order to start the swapping method with a feasible schedule, the sensor schedule that selects at each time step $k$ the sensor $i=\arg \min_{j} c_{k,j}$ with the smallest cost is chosen initially. 
The method must terminate because there is only a finite but very large number of swapping possibilities. To bound the computational demand, the number of swapping trials is limited by means of a predefined value.

\section{Optimal Scheduling}
\label{sec:optimal}
Determining the optimal sensor schedule and thus, directly solving the binary integer program given by \eqref{eq:problem_optimization}--\eqref{eq:problem_binary} can be considered as searching a decision tree with depth $N$ and branching factor $S$. The problem here is that the optimal solution often can be found at an early stage, while the proof of its optimality requires evaluating most of the suboptimal sensor schedules, which is infeasible for large problem sizes.  
In this section, the previously introduced convex optimization approach is combined with efficient search methods for decision trees for early eliminating (pruning) suboptimal schedules. 

\subsection{Branch-and-Bound}
\label{sec:optimal_bab}
A search technique common for classical decision problems like traveling-salesman or knapsack is \BB~(BB) search. The basic idea of BB is to assign lower and upper bounds of the achievable objective function value to any visited node. Based on these bounds, nodes and thus complete subtrees can be pruned under the guarantee that the pruned node is not part of the optimal sensor schedule.

For a particular node that was reached during the search by employing the sensor schedule $\vu_{1:k} \in \{0, 1\}^{k\cdot S}$, the objective function can be written according to
\begin{align}
	\label{eq:optimal_objective}
	J(\vu_{1:N}) = \underbrace{J(\vu_{1:k})}_\text{known} + \underbrace{J(\vu_{k+1:N})}_\text{unknown}~,
\end{align}
where only the value of the first summand is evaluated and thus known. While the value of the second summand is not calculated yet, a lower and upper bound can be easily assigned to it by exploiting the results of \Sec{sec:convex_solving} and \Sec{sec:convex_conversion}. The value of the optimal solution $\vu_{k+1:N}^\text{l}$ of the convex relaxation for minimizing $J(\vu_{k+1:N})$ serves as lower bound and the conversion of $\vu_{k+1:N}^\text{l}$ into a binary-valued vector $\vu_{k+1:N}^\text{u}$ provides an upper bound. Hence, the inequality
\begin{align}
	J(\vu_{1:k}) +  J(\vu_{k+1:N}^\text{l}) \le J(\vu_{1:N}) \le J(\vu_{1:k}) + J(\vu_{k+1:N}^\text{u})
\end{align}
holds for the objective function value in \eqref{eq:optimal_objective}.

\begin{algorithm}[tb]
\caption{Branch-and-Bound search algorithm utilizing convex optimization for calculating lower and upper bounds. The algorithm is initialized with $J_\mathrm{min} = \infty$.}
\label{alg:bb}
\begin{algorithmic}[1]
\State For a given sensor schedule $\vu_{1:k}$ do:
\If{leaf node, i.e., $k=N$}
	\State $J_\mathrm{min} \leftarrow J(\vu_{1:N})$ 
\Else
	\State $\SU \leftarrow \emptyset$  \algorithmiccomment{List of sensors to expand}
	\ForAll{sensors $i\in \{1, \ldots, S\}$} 
		\State // $\vu_{1:k}$ and $u_{k+1,i}=1$ fixed
		\If{cost$_i \le C$ and $J_i \le J_\mathrm{min}$}
		\State $J_i^\text{l} \leftarrow$ Solve convex optimization problem
		\State $J_i^\text{u} \leftarrow$ Calculate upper bound via conversion
		\State $\SU \leftarrow \SU \cup \{i\}$
		\EndIf
	\EndFor
		\State $\SU \leftarrow$ sort($\SU$) \algorithmiccomment{Sort sensors based on $J_i^\text{l}$}
		\ForAll{sensors $i \in \SU$}
			\If{$J_i^\text{l} \le  J_\mathrm{min}$ and $\forall\, j \in \SU: J_i^\text{l} \le J_j^\text{u}$}
				\State Expand $i$\algorithmiccomment{Set $u_{k+1,i} = 1$, call Algorithm 1}
			\EndIf
		\EndFor
\EndIf
\end{algorithmic}
\end{algorithm}

\subsection{Search Algorithm}
\label{sec:optimal_pruning}
The combination of BB search with convex optimization is illustrated in \Alg{alg:bb}, which basically employs a depth-first search. 
For a given sensor schedule $\vu_{1:k}$ it is checked which child nodes should be expanded, i.e., it is checked whether an element $u_{k+1, i}$, $i\in \SS$ of $\vu_{k+1}$ could be set to one or not. Therefore, for each child node $i\in \SS$ the minimum possible cost \vspace{-2mm}
\begin{equation}
	\text{cost}_i := \sum_{n=1}^k \vc_n \cdot \vu_{n} + c_{k+1,i} + \sum_{n=k+2}^N \min_j c_{n,j}~,\vspace{-2mm}
\end{equation}
the value $J_{i} := J(\vu_{1:k+1})$ and the bounds $J_i^\text{l} := J_i + J(\vu_{k+2:N}^\text{l})$, $J_i^\text{u} := J_i + J(\vu_{k+2:N}^\text{u})$ are calculated, where $u_{k+1,i} = 1$ and $u_{k+1,j} = 0$ for all $j \neq i$. 
A node $i$ is expanded only if following four requirements are fulfilled: (1) the cost constraint can be met, i.e., a feasible solution exists (line~8), (2) the value $J_{i}$ of the node is below the value $J_\mathrm{min}$ of the currently best sensor schedule (line~8), (3) the lower bound $J_i^\text{l}$ is below $J_\mathrm{min}$ (line~16), and (4) the lower bound is below the upper bounds of all neighboring nodes $j\neq i$ (line~16).


Obviously, the third requirement implies the second one. But in order to avoid an unnecessary calculation of the lower and upper bound, the second requirement is checked separately together with the first requirement (line~8--12). 
To further accelerate the search, the remaining sensors in $\SU$ are sorted in descending order according of their lower bounds (line~12). In doing so, the search is continued with the most promising sensor first in order to force a stronger reduction of the currently best value $J_\mathrm{min}$. This value is automatically reduced once a leaf node is reached (line~2--3).

\section{Simulation Results}
\label{sec:sim}

\begin{figure*}[t]%
\centering
\includegraphics[]{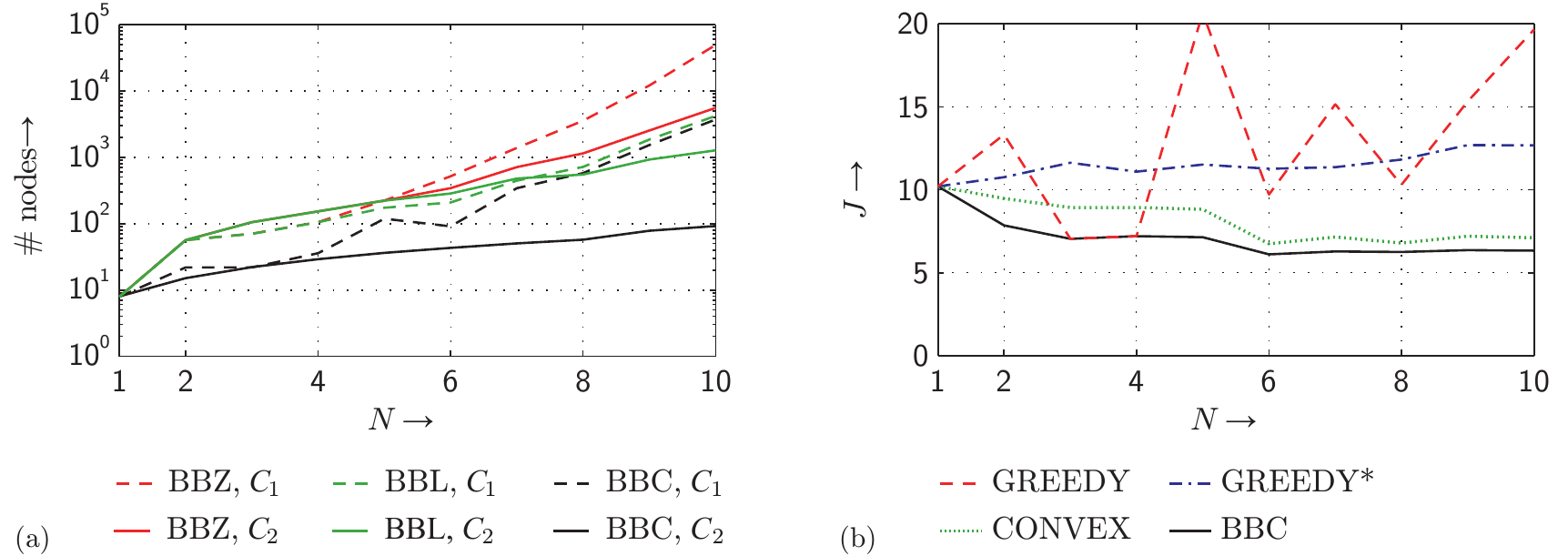}%
\caption{(a) Number of nodes in the decision tree when applying the \BB-based scheduling methods \BBC~(black lines), \BBL~(green), and \BBZ~(red) for different time horizons lengths $N$ and for the cost functions $C_1$ (dashed) and $C_2$ (solid) in log-scale. (b) Objective function values $J$ of the scheduling methods \BBC~(black, solid), \CON~(green, dotted), \GRE~(red, dashed), and \GRM~(blue, dash-dotted) for maximum cost function~$C_1$.}
\vspace{-1mm}
\label{fig:sim}%
\end{figure*}

The effectiveness of the proposed sensor scheduling \linebreak methods is demonstrated in the following by means of a numerical simulation from the field of target tracking\footnote{Further aspects associated to target tracking such as target detection, misses and false alarms, or track maintenance are omitted for simplicity.}. The state $\rvx_k = [\x_k, \dot{\x}_k, \y_k, \dot{\y}_k]\T$ of the observed target comprises the two-dimensional position $[\x_k, \y_k]\T$ and the velocities $[\dot{\x}_k, \dot{\y}_k]\T$ in $x$ and $y$ direction. The 
system matrix and noise covariance matrix of $\rvw_k$ of the dynamics model \eqref{eq:problem_system} are 
\begin{align}
	\label{sec:sim_dynamics}
	\A_k = \I_2 \otimes \begin{bmatrix} 1 & T \\ 0 & 1\end{bmatrix}
	\ \text{ and }\ 
	\C_k^w = q \cdot \I_2 \otimes 
	\begin{bmatrix} 
		\tfrac{T^3}{3} &\tfrac{T^2}{2} \\ \tfrac{T^2}{2} & T
	\end{bmatrix}
	~,~\quad
\end{align}
respectively, where $\I_n$ indicates an $n \times n$ identity matrix and $\otimes$ is the Kronecker matrix product. In \eqref{sec:sim_dynamics},  $T=\Second{1}$ is the sampling interval and $q=0.2$ is the scalar diffusion strength. Mean and covariance of the initial state $\rvec x_0$ are $\hat{\vx}_0 = [0, 1, 0, 1]\T$ and $\C_0^x = 10\cdot \I_4$, respectively.

A sensor network observes the target. It consists of six sensors with measurement matrices 
\begin{align}
	\H_k^1 = \H_k^3 &=
	\begin{bmatrix}
		1 & 0 & 0 & 0
	\end{bmatrix}~,~
	\H_k^2 = \H_k^5 =
	\begin{bmatrix}
		0 & 0 & 1 & 0
	\end{bmatrix}~,
	\\
	\H_k^4 &=
	\begin{bmatrix}
		0 & 0 & 0 & 1
	\end{bmatrix}~,~
	\H_k^6 =
	\begin{bmatrix}
		0 & 1 & 0 & 0
	\end{bmatrix}~,
\end{align}
noise variances $C_k^{v,1} = 0.2$, $C_k^{v,2} = C_k^{v,3} = C_k^{v,4} = 0.1$, $C_k^{v,5} = C_k^{v,6} = 0.05$, and costs $\vec c_k = [1, 2, 3, 2, 3, 2]\T$ for each~$k$. Furthermore, it is also possible to omit a measurement. This option can be considered as having a seventh sensor with infinite noise variance. Performing no measurement is free of cost, i.e., $c_{k,7} = 0$.  Altogether, the set $\mathcal S$ comprises $S=7$ sensors. The scalar functions $g_k(\cdot)$ are set to the root-determinant for each $k$.

For comparison, six different scheduling methods are considered: 
\begin{description}
	\item[\CON] The approach described in \Sec{sec:convex}, which directly solves the convex optimization problem and employs the swapping method for conversion. 
	\item[\BBC] The BB~approach described in \Sec{sec:optimal}. For determining the upper bounds via conversion, the swapping method is employed.
	\item[\BBL] Like \BBC~but without utilizing upper bounds.
	\item[\BBZ] BB search that employs no upper bounds and bounds the second summand in \eqref{eq:optimal_objective} from below with zero (see for example \cite{Chhetri_EURASIP2006}). 
	\item[\GRE] In order to minimize the objective function $J(\cdot)$, at each time step $k$ the sensor that leads to minimum function value $g_k(\cdot)$ \emph{and} allows meeting the maximum cost constraint is scheduled.
	\item[\GRM] Greedy scheduling, where the scalar functions $g_k(\cdot)$ are modified to $g_k(\vu_{1:k}) = \sqrt{|\C_k^x(\vu_{1:k})|}\cdot(1 + \vec c_k\T\cdot \vu_k)$ (see for example \cite{Zhao2002, Khuller_IPL1999}).
\end{description}
For \CON~and \BBC, the number of swapping trials is set to $S\cdot N$.

\subsection{Comparison of Branch-and-Bound Methods}
In \Fig{fig:sim}~(a), the search performance of the three BB methods is compared. For this purpose, two different maximum costs $C_1(N) = \mathrm{round}(1.5\cdot N)$ and $C_2(N) = 3\cdot N$ are considered, which depend on the change of the time horizon length $N = 1, \ldots, 10$. The maximum cost function $C_2(N)$ allows sensor scheduling without omitting a measurement. With the proposed optimal scheduling method \BBC, the number of nodes in the decision tree can be kept on a low level.
Here, the search performance clearly benefits from the tight lower and upper bounds provided by the convex optimization and the conversion, respectively. This can be seen in particular for $C_2$, where \BBC~only visits at most $92$ nodes, while the complete decision tree contains $\sum_{k=1}^N 7^k<3.3\cdot 10^8$ nodes.
The higher number of visited nodes for cost function $C_1$ compared to $C_2$ results from the effect that the more restrictive cost constraint provided by $C_1$ leads to looser bounds. 

Without considering upper bounds for pruning as it is the case for \BBL, the number of visited nodes increases significantly. But still, the search performance of \BBL~is much better than \BBZ~as the lower bound provided by the solution of the convex optimization is closer to the true values of the subtrees. 

Since calculating lower and upper bounds by means of convex relaxation is computationally more demanding than calculating the simple bound used for \BBZ, the runtime of \BBZ~is lower for short time horizons even if \BBZ~leads to larger decisions trees. But with an increasing length of the time horizon, the difference in runtime between \BBZ~and the other BB methods becomes smaller and at some point, both methods outperform \BBZ. For example, with the current, barely optimized implementation based on MATLAB version $7.9$, \BBC~outperforms \BBZ~from horizon length $N=9$ on for cost function $C_1$. It is expected that employing an optimized implementation, e.g., with an optimized convex programming toolbox like CVX \cite{Grant_CVX}, outperforming \BBZ~occurs for significantly shorter time horizons.


\subsection{Comparison of Objective Function Values}
In \Fig{fig:sim}~(b), the objective function values of \BBC~are compared with \GRE, \GRM, and \CON~for the costs $C_1(N)$. Both greedy methods are the computationally cheapest, but in turn provide highly suboptimal results. Due to the myopic planning, \GRE~is not able to anticipate the long-term effect of early selecting costly sensors. In this simulation example, \GRE~omits measurements at the last time steps of the horizon and not in between in order to meet the maximum cost constraint. This effect is attenuated by \GRM~thanks to the modified scalar functions $g_k(\cdot)$, where the sensor costs are minimized together with the state covariance. 
In doing so, \GRM~can for example utilize sensor $1$ instead of the more costly but also more accurate sensor $3$. This in turn enables \GRM~to omit less measurements at the end of the time horizon.

The proposed suboptimal \CON~method provides sensor schedules close to the optimal ones, whereas the computational demand is significantly smaller compared to \BBC, especially for very long time horizons. 
\CON~trades scheduling quality off against scheduling complexity, which is desirable for computationally constrained sensor systems.

\subsection{Tracking Error}
\label{sec:sim_trackingerror}
For a time horizon of length $N=10$ and for the cost function $C_1(N)$, $100$ Monte Carlo simulation runs are performed for evaluating the target tracking error when employing \GRE, \GRM, \CON, and \BBC. In \Fig{fig:sim_rms}, the root mean square error (rmse) with respect to the target position $[x, y]\T$ is depicted. Compared to the greedy methods, both \CON~and \BBC~provide the lowest tracking errors together with the lowest uncertainty (consider \Fig{fig:sim}~(b)), whereas \BBC~performs best. The little bump in the curves of \CON~and \BBC~around time step $k=6$ results from omitting a measurement. Here, \GRM~has the smallest tracking error, which comes at the expense of a higher error at the end of the time horizon due to omitting measurements. 

However, in the simulation example considered here, the tracking error of \GRM~is relatively close to the errors of the proposed convex optimization based approaches. 
More significant benefits of multi-step sensor scheduling are expected for example in scenarios where sensors are temporarily unavailable or in scenarios with nonlinear dynamics and sensor models (consider for example the results in \cite{Chhetri_EURASIP2006, PhD2009}). The application of the proposed approaches to nonlinear models is subject of future work. 


\begin{figure}[t]%
\centering
\includegraphics[]{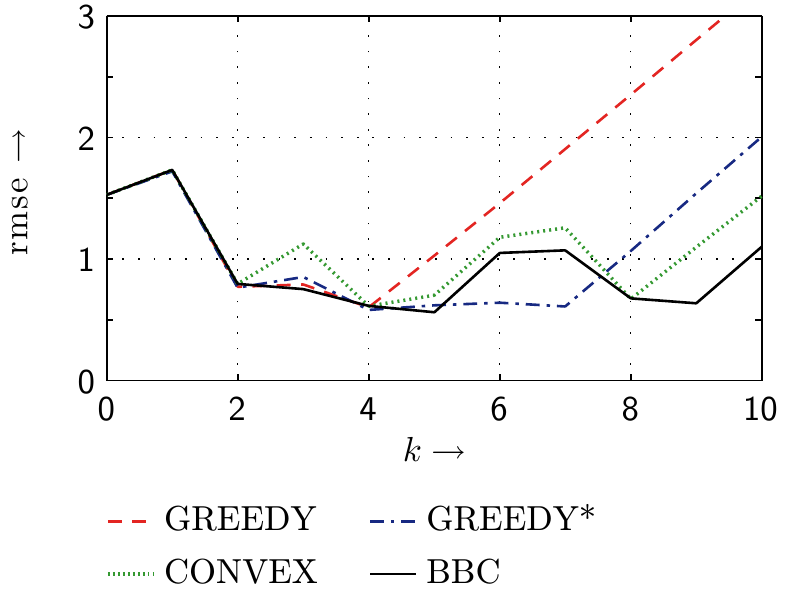}%
\caption{Rmse of the target position $[x, y]\T$ when \GRE~(red, dashed), \GRM~(blue, dash-dotted), \CON~(green, dotted), and \BBC~(black, solid) are employed over a time horizon of length $N=10$.}
\vspace{-3mm}
\label{fig:sim_rms}%
\end{figure}

\section{Conclusions and Future Work}
\label{sec:conclusions}
Convex optimization is a promising direction for determining multi-step sensor schedules. In this paper, a general sensor scheduling problem for linear Gaussian systems was formulated and the convexity of its relaxation was proven. Based on this result, a suboptimal and an optimal scheduling approach utilizing convex optimization have been proposed. 
While the suboptimal approach trades estimation quality off against computational demand, the optimal one outperforms existing optimal algorithms with regard to search speed.


Compared to existing approaches on sensor scheduling via convex optimization, arbitrary linear Gaussian sensor scheduling problems are covered. Furthermore, both proposed scheduling methods are appropriate for long time horizons and many sensors, where choosing the better suited approach for a given scheduling problem depends on the requirements on estimation quality and computational capabilities.

Future work is devoted to efficient sensor scheduling for very long or even infinite time horizons, where the BB-based approach is computationally intractable. Here, model-predictive control (also referred to as moving horizon control, see for example \cite{Camacho2007}) can be employed. Furthermore, it is intended to extend the proposed convex sensor scheduling approaches to nonlinear dynamics and sensor models. 
This could be achieved for instance by employing model-predictive control in combination with a conversion of the nonlinear models into linear ones via linearization, e.g., first-order Taylor series expansion or statistical linearization as proposed in \cite{PhD2009}. 



\bibliographystyle{IEEEtran}

\end{document}